\newcommand{\comment}[1]{}
\newcommand{\teq}[1]{\stackrel{#1}{=}}
\newcommand{\nteq}[1]{\stackrel{#1}{\neq}}
\newcommand{\nchoosek}[2]{{#1 \choose #2}}
\newcommand{\mybox}[1]{\centerline{\framebox{\parbox[c]{6.5 in}{#1}}}}
\newenvironment{proof}{\paragraph{\bf Proof:}}{\hspace*{\fill}\(\Box\)}
\newtheorem{theorem}{Theorem}
\newtheorem{corollary}{Corollary}
\def\noflash#1{\setbox0=\hbox{#1}\hbox to 1\wd0{\hfill}}
\begin{document}
\title{Capacity of Byzantine Consensus with\\ Capacity-Limited Point-to-Point Links
 \footnote{\normalsize This research is supported
in part by Army Research Office grant W-911-NF-0710287 and National
Science Foundation award 1059540. Any opinions, findings, and conclusions or recommendations expressed here are those of the authors and do not
necessarily reflect the views of the funding agencies or the U.S. government.}}

\author{Guanfeng Liang and Nitin Vaidya\\ \normalsize Department of Electrical and Computer Engineering, and\\ \normalsize Coordinated Science Laboratory\\ \normalsize University of Illinois at Urbana-Champaign\\ \normalsize gliang2@illinois.edu, nhv@illinois.edu}

\maketitle
\date{}


\thispagestyle{empty}

\newpage

\setcounter{page}{1}

\section{Introduction}
\label{sec:intro}

In our previous work \cite{techreport_CBA_final,techreport_Byzantine_complete,infocom-ByzantineBroadcast}, we investigated the {\bf capacity} of the broadcast version of the Byzantine agreement problem \cite{psl82} in networks where communications links are capacity limited. In this report, we are going to study capacity of the {\bf consensus} version of the Byzantine agreement problem. 

The Byzantine consensus problem considers $n$ nodes, namely $P_1,...,P_n$, of which at most $f$ nodes may be {\em faulty} and deviate from the algorithm in arbitrary fashion. Each node $P_i$ is given an input value $v_i$, and they want
to agree on a value $v$ such that the following properties are satisfied:
\begin{itemize}
\item {\em Termination}: every fault-free $P_i$ eventually decides on an output value $v_i'$,
\item {\em Consistency}: the output values of all fault-free nodes are equal, i.e.,
for every fault-free node $P_i$, $v_i'=v'$ for some $v'$,
\item {\em Validity}: if every fault-free $P_i$ holds the same input $v_i=v$ for some $v$, then $v'=v$.
\end{itemize}

\subsection{Models}
In this report, we use the similar network and adversary models as in our previous work \cite{techreport_Byzantine_complete}. 

\subsubsection{Network Model} We assume a synchronous network modeled as a fully-connected directed graph $G(V,E)$, where $V$ is the set of $n$ nodes and $E$ is the set of $n(n-1)$ directed links. Each directed link $e_{ij} = (P_i,P_j)$ is associated with a capacity $c_{ij}$, which specifies the maximum amount of information that can be transmitted on that link per unit time. That is, for any period of duration $t$ and a link $e_{ij}$, up to $t c_{ij}$ bits can be sent from node $P_i$ to $P_j$ on link $e_{ij}$, but no more than $t c_{ij}$. The capacity of some links may be 0, which implies that these links do not exist.


\subsubsection{Adversary Model} We assume that  the adversary has complete knowledge of the network topology, the algorithm, the state of all nodes, the input values, and no secret is hidden from the adversary.
The adversary can take over up to  $f$
processors at any point during the algorithm, where
$f < n/3$. These nodes are said to be {\em faulty}. The faulty nodes can engage in any kind of
deviations from the algorithm, including sending false messages,
collusion, and crash failures. 

\subsection{Capacity of Consensus}
Our goal in this work is to design algorithms that can achieve optimal {\em throughput} of consensus.

When defining throughput, the input value $v_i$ at each node $P_i$ referred in the above definition of
consensus is viewed as an infinite sequence of {\em information} bits. We assume that the information bits have already been compressed, such that for any subsequence of length $l>0$, the $2^l$ possible sequences are sent by the sender with equal probability. Thus, no set of information bits sent by the sender contains useful information about other bits. This assumption comes from the observation about ``typical sequences'' in Shannon's work \cite{shannon}. 

At each node $P_i$, we view
the output value $v_i'$ as being represented in an array of bits.
Initially, none of the bits in this array at a node have been agreed upon.
As time progresses, the array is filled in with agreed bits. In principle, the
array may not necessarily be filled sequentially. For instance, a node may agree
on bit number 3 before it is able to agree on bit number 2. 
Once a node agrees on any bit,
that agreed bit cannot be changed.

We assume that a consensus algorithm begins execution at time 0.
In a given execution of a consensus algorithm,
suppose that by time $t$ all the
fault-free nodes have agreed upon bits 0 through $b(t)-1$, and at least one
fault-free node has not yet agreed on bit number $b(t)$.
Then, the consensus {\em throughput}\, is defined as\footnote{As a technicality,
we assume that the limit exists.}
$\lim_{t\rightarrow \infty} ~ \frac{b(t)}{t}$.

~

\mybox{
{\bf Capacity} of consensus in a given network $G(V,E)$, denoted as $C_{con}(G)$, is defined as the supremum of all achievable
consensus throughputs.
}

\section{Upper Bound}
In this section, we prove an upper bound of the consensus capacity of any network $G(V,E)$. Let us first define some notations. For any subset of nodes  $S\subset V$ such that $|S|\le f$, denote $\Gamma_S = \{\gamma: \gamma\subset V\backslash S ~\mathrm{ and } ~|\gamma| = n-|S|-f\}$, i.e., $\Gamma_S$ is the set of the $\nchoosek{n-|S|}{n-|S|-f}$ subsets of $n-|S|-f$ nodes that do not include nodes in $S$. For every $\gamma \in \Gamma_S$, denote $I_S(\gamma) = \sum_{P_j \in \gamma,P_i\in S}c_{j,i}$ as the incoming capacity to the set of nodes in $S$ from nodes in $\gamma$, and $I_S^* = \min_{\gamma \in \Gamma_S}I_S(\gamma)$ be the minimum over all $\gamma \in \Gamma_S$. 

We prove the following upper bound of the consensus capacity:
\begin{theorem}
\label{thm:upper_bound}
For any network $G(V,E)$, the capacity of consensus satisfies
\begin{equation}
C_{con}(G)\le I^* \triangleq \min_{S\subset V,|S|\le f} I_S^*.
\end{equation}
\end{theorem}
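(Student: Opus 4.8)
The plan is to prove the per-pair bound $C_{con}(G)\le I_S(\gamma)$ for every admissible $S$ with $|S|\le f$ and every $\gamma\in\Gamma_S$, and then minimize. Fix such an $S$ and $\gamma$ and set $T=V\setminus(S\cup\gamma)$; since $|\gamma|=n-|S|-f$ we have exactly $|T|=f$, so declaring all of $T$ faulty is an admissible adversary choice. The only links on which the fault-free nodes of $S$ can receive \emph{trustworthy} information from outside $S$ are the direct links from $\gamma$, whose total capacity is $I_S(\gamma)$; every other incoming link to $S$ originates at a node of $T$ and is therefore adversary-controlled. I will exhibit one adversary strategy together with a family of input assignments under which the fault-free nodes of $S$ are forced to reconstruct the $\approx Rt$ incompressible bits of the agreed value (for a purported throughput $R$) purely from the $\gamma\to S$ transcript, thereby capping $R$ at $I_S(\gamma)$.

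First I would pin the agreed value by an indistinguishability argument. Index a family of executions by the input stream $z$ assigned to the nodes of $\gamma$, holding $S$'s inputs fixed and keeping all of $T$ faulty throughout. For each $z$, compare the execution in which $S$ and $\gamma$ are fault-free (and $T$, in its messages to $\gamma$, mimics honest nodes holding common input $z$) against an auxiliary execution in which instead $S$ is faulty and $\gamma\cup T$ are fault-free with common input $z$. The latter is admissible precisely because $|S|\le f$, and there \emph{Validity} forces the agreed value to be $z$; arranging the two executions to be indistinguishable to $\gamma$ (the faulty party in each mimicking honest behavior towards $\gamma$) then forces $\gamma$, and hence by \emph{Consistency} also $S$, to agree on $z$ in the former as well. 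Thus as $z$ ranges over the $\approx 2^{Rt}$ typical streams, the output the fault-free nodes of $S$ must produce by time $t$ ranges over as many distinct values.

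Second, I would hold everything $T$ sends to $S$ to a single canonical, $z$-independent schedule, with $S$'s inputs also fixed; then the output computed by $S$ is a function of only the transcript received on the $\gamma\to S$ links, so that transcript must take at least $2^{Rt}$ distinct values over the family. Since at most $I_S(\gamma)\,t$ bits traverse those links by time $t$, this yields $Rt\le I_S(\gamma)\,t$, i.e. $R\le I_S(\gamma)$; taking the minimum over $\gamma\in\Gamma_S$ and then over $S$ gives $C_{con}(G)\le\min_{S}\min_{\gamma}I_S(\gamma)=\min_S I_S^*=I^*$. I expect the main obstacle to be the consistency of the simulation in the pinning step: the faulty set $T$ must simultaneously mimic honest behavior with input $z$ \emph{towards} $\gamma$ (to pin $\gamma$'s output to $z$) yet send $S$ the same $z$-independent messages across the whole family. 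Because the protocol is interactive, any $z$-dependence in the $T\to\gamma$ messages could in principle feed back to $S$ through $\gamma$; I would therefore define all faulty behaviors as round-by-round simulations of grafted honest executions and verify inductively, using synchrony, that this $z$-dependence reaches $S$ only along the metered $\gamma\to S$ links and never along the fixed $T\to S$ links. Making this layered construction airtight, and converting the ``$2^{Rt}$ distinct outputs'' count into a rate bound via the incompressibility (typical-sequence) assumption, is the crux of the argument.
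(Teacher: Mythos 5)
Your proposal is correct and follows essentially the same route as the paper's proof: the paper likewise swaps the faulty set between two scenarios (faulty $T$ versus faulty $S$), uses validity in the second scenario plus indistinguishability of $\gamma$'s view plus consistency to pin $S$'s output to $\gamma$'s input, and then applies a pigeonhole/counting argument against the at most $t\,I_S(\gamma)$ bits that can enter $S$. The ``crux'' you flag---keeping the $T\to S$ messages $z$-independent while $T$ simulates honest behavior towards $\gamma$---is resolved in the paper by the cleaner device of defining both faulty behaviors as honest executions on the subgraph $G'$ in which all links between $S$ and $T$ are deleted (so $T$ sends nothing to $S$ and ignores $S$, and symmetrically for faulty $S$), which makes your round-by-round induction immediate.
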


\begin{proof}
Suppose on contrary that $C_{con}(G)>I^*$. Then there must exist a consensus algorithm $A$ that achieves consensus on $b(t) > t I^* $ bits during some period $[0,t]$, with no more than $t c_{ij}$ bits transmitted over link $e_{ij}$ for all $P_i,P_j\in V$. 

\begin{figure}[ht]
\centering
\includegraphics[width = 4 in]{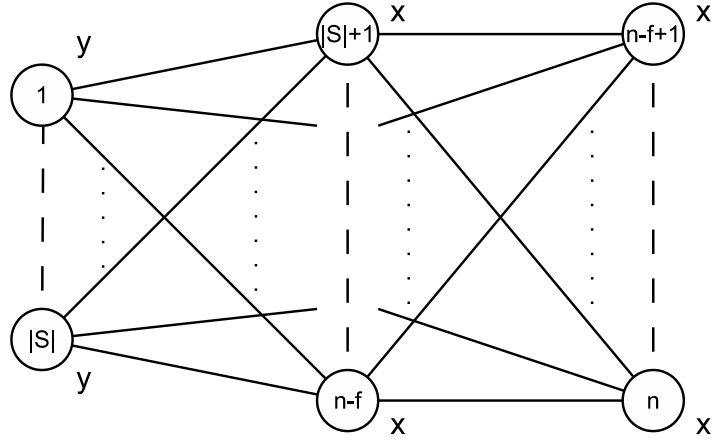}
\caption{State machine for the proof of Theorem \ref{thm:upper_bound}.}
\label{fig:state_machine}
\end{figure}

Without loss of generality, assume that $I^* = I_S(\gamma)$ for $S = \{P_1,\cdots, P_{|S|}\}$, and $\gamma = \{P_{|S|+1},\cdots,P_{n-f}\}$. Now consider the state machine $G'(V',E')$ illustrated in Figure \ref{fig:state_machine}. $G'(V',E')$ is a subgraph of $G(V,E)$, with node $i\in V'$ representing $P_i\in V$, and links between nodes in  $\{1,\cdots, |S|\}$ and nodes $\{n-f+1,\cdots, n\}$ removed from $E$.
The remaining links have the same capacity as in $G(V,E)$. So nodes in $\{1,\cdots, |S|\}$ receive nothing from nodes $\{n-f+1,\cdots, n\}$, and nodes $\{n-f+1,\cdots, P_n\}$ receive nothing from nodes in $\{1,\cdots, |S|\}$. In this state machine, every node $i$ runs the correct code that node $P_i$ in network $G(V,E)$ should run according to algorithm $A$. Nodes in $\{1,\cdots, |S|\}$ are all given the same input value $y$ of $b(t)$ bits, and all the other nodes are given the same input value $x$ of $b(t)$ bits.

Now consider the following two scenarios in $G(V,E)$:
\begin{enumerate}
\item Nodes in $S$ are fault-free and given input value $y$; $P_{|S|+1},\cdots, P_{n-f}$ are fault-free and given input value $x$; Only $P_{n-f+1},\cdots,P_n$ are faulty. The $f$ faulty nodes behave as nodes $n-f+1,\cdots, n$ in $G'(V',E')$ by not sending anything to nodes in $S$ and ignoring any incoming information received from nodes in $S$.
\item $P_{|S|+1},\cdots, P_n$ are fault-free and given input value $x$. Only the $|S| \le f$ nodes in  $S$ are faulty and behave as nodes $1,\dots, |S|$ in $G'(V',E')$ by not sending anything to $P_{n-f+1},\cdots,P_n$ and ignoring any incoming information received from $P_{n-f+1},\cdots,P_n$. According to the validity property, the output value of $P_{|S|+1},\cdots, P_n$ is $x$.
\end{enumerate}
Given that algorithm $A$ solves consensus of $b(t)$ bits by time $t$, the nodes will reach consensus correct by time $t$ in both scenarios.
It is not hard to see that the information observed by $P_{|S|+1},\cdots, P_{n-f}$ in both scenarios is the same as the information observed by node $|S|+1,\cdots, n-f$ in $G'(V',E')$. As a result, they cannot distinguish between the two scenarios and must decide on the same value. So in scenario 1,  $P_{|S|+1},\cdots, P_n$ decide on the output value $x$. Then according to the consistency property, nodes in $S$ will also decide on the same output value $x$ -- the input value of $P_{|S|+1},\cdots, P_{n-f}$.

Now let us fix $y$ in scenario 1 and vary $x$. Notice that in scenario 1 nodes in $S$ receive no more than $t I^* $ bits. Since we assume that $b(t) > t I^*$, according to the pigeonhole principle, there must be two values $x_1\neq x_2$ such that the resulting $t I^*$ bits received by nodes in $S$ are identical. As a result, nodes in $S$ must decide on the same output value when $P_{|S|+1},\cdots, P_{n-f}$ are given the input value $x_1$ or $x_2$. However, as we discussed above,   output value of nodes in $S$  equals to the input value of $P_{|S|+1},\cdots, P_{n-f}$. So nodes in $S$ cannot decide on the same output value when  $P_{|S|+1},\cdots, P_{n-f}$ are given two different input values $x_1$ and $x_2$, which leads to a contradiction.
\end{proof}

\section{Capacity of Complete 4-Node Networks}
\label{sec:4node}

In this section, we prove by construction that the upper bound from Theorem \ref{thm:upper_bound} is tight for complete 4-node networks: 4-node networks in which all directed links have capacity $> 0$.

For the clarity of presentation, we use notations slightly different from the previous sections. We rename the four nodes as A, B, C and D, instead of $P_1,\cdots, P_4$. We denote by XY the directed link from node X to node Y. When it is clear from context, we use $XY$ to also represent the capacity of the directed link XY. We use the notation $\widehat{XY}$ to represent the pair of directed links XY and YX, as well as the sum capacity of this pair of links, i.e., $\widehat{XY}=XY+YX$.

For a 4-node network $G$ with at most 1 faulty node, Theorem \ref{thm:upper_bound} states that $C_{con}(G)$ is no more than the minimum of the sum capacity of any two incoming links to a node, i.e.,
\begin{eqnarray}
C_{con}(G) \le I^* \!\!\!\!&=\!\!\!& \min\{BA+CA,BA+DA,CA+DA,AB+CB,AB+DB,CB+DB,\nonumber \\
&& AC+BC,AC+DC,BC+DC,AD+BD,AD+CD,BD+CD\} \label{eq:4node:bound}.
\end{eqnarray}

We are going to present an algorithm for complete 4-node networks that achieve consensus throughput arbitrarily close to the upper bound specified by Theorem \ref{thm:upper_bound}, i.e., it can achieve consensus throughput of $R$ bits per unit time for all $R<I^*$.
Before we present our capacity achieving algorithm, we first prove the following theorem.

\begin{theorem}
For any positive value $R< I^*$, at least three of $\widehat{AB},\widehat{AC},\widehat{AD},\widehat{BC},\widehat{BD},\widehat{CD}$ are $> R$.
\end{theorem}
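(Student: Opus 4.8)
The plan is to argue by contradiction. Suppose fewer than three of the six quantities $\widehat{AB},\widehat{AC},\widehat{AD},\widehat{BC},\widehat{BD},\widehat{CD}$ exceed $R$; then at least four of them are $\le R$. I will call an edge $\widehat{XY}$ \emph{small} when $\widehat{XY}\le R$, and I will regard the small edges as forming a graph $H$ on the vertex set $\{A,B,C,D\}$. The one fact about $I^*$ I will exploit is that, since $I^*$ is the minimum over \emph{all} pairs of incoming links to a single node, every such pair has sum at least $I^*$: for any node $X$ and any two distinct other nodes $Y,Z$ we have $YX+ZX\ge I^*>R$.

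First I would record a purely combinatorial fact: a simple graph on four vertices with at least four edges must contain a cycle, because a forest on four vertices has at most three edges. Hence $H$ contains a cycle $C$, and in a four-vertex graph the only possible cycle lengths are $3$ and $4$, so $C$ has length $k\in\{3,4\}$. The point of finding a cycle (rather than just four small edges in some arbitrary arrangement) is that a cycle is $2$-regular, which is exactly the structure needed to make the next step close up.

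Next I would sum the incoming-link constraints around $C$. Writing $C=(X_1,X_2,\dots,X_k,X_1)$ with indices read modulo $k$, each vertex $X_i$ has exactly two cycle neighbors $X_{i-1}$ and $X_{i+1}$, both distinct from $X_i$ and from each other, so the pair of incoming links satisfies $X_{i-1}X_i+X_{i+1}X_i\ge I^*>R$. Summing this over the $k$ vertices, the edge $\{X_i,X_{i+1}\}$ contributes its forward capacity $X_iX_{i+1}$ to the constraint at $X_{i+1}$ and its reverse capacity $X_{i+1}X_i$ to the constraint at $X_i$, each exactly once; hence the total telescopes into $\sum_{i=1}^{k}\widehat{X_iX_{i+1}}$. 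This gives the lower bound $\sum_{i=1}^{k}\widehat{X_iX_{i+1}}>kR$. On the other hand, every edge of $C$ lies in $H$ and is therefore small, so $\sum_{i=1}^{k}\widehat{X_iX_{i+1}}\le kR$. The two bounds contradict each other, which establishes the theorem.

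The routine verifications are that the two cycle neighbors of each vertex are genuinely distinct (immediate for $k\in\{3,4\}$) and the bookkeeping that each directed capacity of a cycle edge is counted exactly once. I expect the pairing step to be the main obstacle, or at least the only place where care is required: one must arrange the summation so that every node contributes a \emph{pair of incoming} links, matching the form of $I^*$, rather than a mixture of incoming and outgoing capacities. Traversing a cycle is precisely what forces this, and it is why I look for a cycle in $H$; for a non-cyclic configuration of small edges (for instance a star centered at one vertex) one could not cover all the relevant capacities by disjoint incoming pairs, and the telescoping would fail.
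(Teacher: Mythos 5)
Your proposal is correct and complete, but it is organized in a genuinely different way from the paper's proof. The paper argues directly: it first sums the three incoming-pair constraints around a triangle ($BA+CA$, $AB+CB$, $AC+BC$, each $\ge I^*>R$) to conclude that every triangle contains a big edge, then fixes $\widehat{AB}>R$ without loss of generality and splits into cases according to whether $\widehat{CD}\le R$ or $\widehat{CD}>R$, handling the latter case by summing around the $4$-cycle $A$--$C$--$B$--$D$--$A$ to get $\widehat{AC}+\widehat{AD}+\widehat{BC}+\widehat{BD}>4R$. You contrapose instead: at most two big edges leaves at least four small ones, four edges on four vertices force a cycle (the forest bound), and your cycle-sum lemma --- summing $X_{i-1}X_i+X_{i+1}X_i\ge I^*>R$ over the vertices of a $k$-cycle telescopes to $\sum_i \widehat{X_iX_{i+1}}>kR$ --- rules out an all-small cycle. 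The underlying estimate is identical in both proofs (the paper's triangle computation is exactly your $k=3$ case, and its four-term sum is exactly your $k=4$ case, both hinging on the same pairing of the two incoming capacities at each cycle vertex), but your packaging buys something real: it eliminates the WLOG and the case analysis, and the lemma ``no cycle can consist entirely of edges with $\widehat{XY}\le R$'' generalizes verbatim to networks with more than four nodes, whereas the paper's explicit enumeration is tied to the $4$-node setting. What the paper's version buys in exchange is concreteness: the two required computations are written out in full and can be checked by inspection at this small scale. Your side conditions (distinctness of the two cycle neighbors when $k\in\{3,4\}$, each directed capacity counted exactly once in the telescoping, and the fact that any pair of incoming links to a node has sum at least $I^*$, which is precisely the paper's Equation (2)) are all verified correctly, so there is no gap.
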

\begin{proof}
Consider any subset of three nodes, say $\{A, B, C\}$. We have
\begin{eqnarray}
\widehat{AB} + \widehat{AC} + \widehat{BC} &=& (AB+BA) + (AC+CA) + (BC+CB)\\
&=& (BA+CA) + (AB+CB) + (AC+BC)\\
&>& 3R.
\end{eqnarray}
The inequality follows the fact that $BA+CA,AB+CB,AC+BC$ are all $\ge I^*>R$. It then follows that at least one of $\widehat{AB}, \widehat{AC}, \widehat{BC}$ is $>R$.

Without loss of generality, assume that $\widehat{AB} >R$. For the two subsets of nodes $\{A, C, D\}$ and $\{B, C, D\}$, according to the same argument, at least one of $\{\widehat{AC},\widehat{AD},\widehat{CD}\}$ and one of $\{\widehat{BC},\widehat{BD},\widehat{CD}\}$ are $>R$. There are two cases:
\begin{itemize}
\item $\widehat{CD}\le R$: It follows that one of $\{\widehat{AC},\widehat{AD}\}$ and one of $\{\widehat{BC},\widehat{BD}\}$ are $>R$. So at least three pairs of links that are $>R$.
\item $\widehat{CD}>R$: In this case, we have
\begin{eqnarray}
\widehat{AC} + \widehat{AD} + \widehat{BC} + \widehat{BD}
&=& (AC+CA) + (AD+DA) + (BC+CB) + (BD+DB)\\
&=& (CA+DA) + (CB+DB) + (AC+BC) + (AD+BD)\\
&>& 4R.
\end{eqnarray}
So at least one of $\widehat{AC} , \widehat{AD} , \widehat{BC} , \widehat{BD}$ is $>R$. Again, we have three pairs of links that are $>R$.
\end{itemize}
This proves the theorem.

\end{proof}

Then the corollary below immediately follows.
\begin{corollary}
There must be a subset of three nodes, say \{X,Y,Z\}, such that $\widehat{XY},\widehat{XZ} >R$.
\end{corollary}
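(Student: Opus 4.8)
The plan is to deduce this directly from the preceding theorem by a short counting argument phrased in terms of an auxiliary graph. I would form a graph $H$ on the four vertices $\{A,B,C,D\}$ in which two nodes are joined by an edge exactly when the corresponding pair exceeds $R$, i.e.\ when $\widehat{\cdot}>R$. The theorem just proved guarantees that at least three of the six pairs $\widehat{AB},\widehat{AC},\widehat{AD},\widehat{BC},\widehat{BD},\widehat{CD}$ satisfy this, so $H$ has at least three edges. Restated in this language, the corollary asks for a vertex of $H$ of degree at least $2$, whose two incident edges give the desired $X$ with neighbors $Y,Z$ satisfying $\widehat{XY},\widehat{XZ}>R$.

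The one fact I would invoke is that a graph on four vertices cannot contain three pairwise vertex-disjoint edges: three disjoint edges would meet $3\times 2 = 6$ distinct vertices, contradicting $|\{A,B,C,D\}|=4$ (equivalently, the maximum matching in $K_4$ has size $\lfloor 4/2\rfloor = 2$). Hence among any three edges of $H$ at least two must share a common endpoint. Taking that shared endpoint to be $X$ and its two neighbors to be $Y$ and $Z$ completes the argument.

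There is essentially no obstacle in this step; the only thing to check is the matching bound, which is immediate for four vertices, so I would simply record it and conclude. The proof is therefore one short paragraph: apply the theorem to get three heavy pairs, observe that they cannot all be pairwise disjoint on four nodes, and let $X$ be the common node of two of them.
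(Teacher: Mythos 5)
Your proof is correct, and it supplies exactly the step the paper leaves implicit: the paper gives no explicit proof of the corollary, declaring that it ``immediately follows'' from the theorem, and your observation that three edges on four vertices cannot be pairwise disjoint (so two of the three heavy pairs share a node $X$) is precisely the pigeonhole argument that justifies this. Nothing is missing; the matching bound on $K_4$ is the whole content of the deduction.
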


Without loss of generality, we can assume that $\widehat{AB}$ and $\widehat{BC}$ are $>R$. Now we can describe our consensus algorithm.  The discussion in this section is not self-contained, and relies heavily on the material in \cite{techreport_Byzantine_complete,infocom-ByzantineBroadcast} -- please refer to \cite{techreport_Byzantine_complete,infocom-ByzantineBroadcast} for the necessary background. 

In our consensus algorithm, the sequence of information bits at each node are divided into generations of $Rc$ bits (the choice of $c$ will be elaborated later), and consensus is achieved one generation after another using the proposed algorithm. Denote $X(g)$ as the input value at node X for the $g$-th generation. $X(g)$ is represented as a vector of $R$ packets, each of which consists of $c$ bits. Similar to \cite{techreport_Byzantine_complete,infocom-ByzantineBroadcast}, we require node X to generate coded packets from $X(g)$ such that every subset of $R$ coded packets consist linearly independent combinations of the $R$ original packets.

The proposed consensus algorithm has five
modes of operation, named {\em Undetected 2=, Undetected 1=1$\neq$, Undetected 2$\neq$, Detected,} and {\em Identified}. The Undetected modes operate when no failure has been detected yet. The Detected mode operates after failure is detected and the location of the faulty node has been narrowed down to a set of 2 nodes. The Identified mode operates after the faulty node has been located.
Similar to our broadcast algorithm in \cite{techreport_Byzantine_complete,infocom-ByzantineBroadcast}, 
repeated (and pipelined) execution of our algorithm
will be used to achieve throughput approaching the capacity.
At time 0 (the first generation),
the network starts in mode Undetected 2=.

Before describing the operations in different modes, we introduce some terminologies to simplify the description.

\begin{enumerate}
\item ``Node X checks the consistency of the packets it has received'':

Node X tries to find the unique solution for every subset of $R$ coded packets it has received. If every such subset has a unique solution, and the solutions of all such subsets are identical, then node X determines these packets to be {\em consistent}, otherwise they are {\em inconsistent}. Node X reliably broadcasts a 1-bit notification indicating whether it finds the received packets consistent or not, using an existing error-free Byzantine broadcast algorithm (e.g. \cite{bit_optimal_89,opt_bit_Welch92}). 

\item If $\widehat{XY} \ge R$, ``the pair of nodes (X,Y) check directly'': 

Nodes X and Y each generates $XY$ and $YX$ coded packets from $X(g)$ and $Y(g)$, and exchange these packets over links XY and YX. Then node X checks if the packets received on link YX are consistent with its own input value $X(g)$, if yes, it reliably broadcasts (using an existing error-free Byzantine broadcast algorithm) a 1-bit notification ``='' or ``$\neq$'' indicating whether the packets received from Y are consistent with $X(g)$ or not. Similar action is taken by node Y. If both notifications from X and Y are ``='', then we say $X(g)\equiv Y(g)$; otherwise we say $X(g)\nequiv Y(g)$. 

Since $\widehat{XY} > R$, it can be concluded that if both nodes X and Y are fault-free, $X(g)\equiv Y(g)$ if and only if $X(g) = Y(g)$, i.e., the result of the checking reflects the relationship between $X(g)$ and $Y(g)$ correctly.

\item ``The pair of nodes (X,Y) check  through node Z'': 

Nodes X and Y each generates $XY,XZ$ and $YX,YZ$ coded packets from $X(g)$ and $Y(g)$, and send these packets over links XY, XZ, YX and YZ. Node Z forwards as many coded packets received from node X as possible  to node Y, and forwards as many packets received from node Y as possible  to node X. Then each of nodes X, Y and Z first checks the consistency of the packets  have received from the other two nodes, according to the description in point 1. If any one of the three nodes finds the received packets inconsistent, failure is detected. Otherwise, node X then checks if the unique solution of the received packets equals to its own input value $X(g)$. Then it broadcasts a 1-bit notification ``='' or ``$\neq$'' indicating whether the packets received from Y and Z are consistent with $X(g)$ or not. Similar action are taken at node Y. If both notifications from nodes X and Y are ``='', we say $X(g)\teq{Z}Y(g)$; otherwise we say $X(g)\nteq{Z}Y(g)$.

It is not hard to show that 
\begin{eqnarray}
&&XY + YX + \min\{XZ,ZY\} + \min\{YZ,ZX\}\\
 &&= XY + YX + \min\{XZ+YZ, XZ+ZX, ZY+YZ, ZY+ZX\}\\
&&=\min\{XY+YX + (XZ+YZ),~ XY+XZ+(YX+ZX),\\
&&~~~~~~~~~~~~YX+YZ+(XY+ZY),~(YX+ZX)+(XY+ZY)\}\\
&&\ge R.
\end{eqnarray}
The inequality follows from the fact that $(XZ+YZ),~(YX+ZX)$ and $(XY+ZY)$ all $>R$.

So if all three nodes X, Y and Z are fault-free, $X(g)\teq{Z} Y(g)$ if and only if $X(g)=Y(g)$, i.e., the result of the checking through node Z reflects the relationship between $X(g)$ and $Y(g)$ correctly.

\item ``The pair of nodes (X,Y) check value $X'(g)$ and $Y(g)$ through node Z'':

As we will see later, this operation is performed only when it has been determined that the input of node X (or Y) is different from the inputs of other nodes. Then all node X has to do is to make sure the packets it receives are consistent with the other nodes. In this case, node X first solves the packets it has received for a unique solution $X'(g)$. Then it checks $X'(g)$ with node Y through node Z according to the operation in point 3, as if $X'(g)$ is its own input.
\end{enumerate}

\subsection{Mode Undetected 2=}
\label{mode:undetected_2=}
For the $g$-th generation, the algorithm operates in this mode if no failure has been detected so far, and the inputs at nodes A, B and C appear to be identical so far, i.e., $A(h)\equiv B(h)$ and $B(h) \equiv C(h)$ for all $h<g$. The first generation operates in this mode too. Mode Undetected 2= proceeds in rounds as described below:

\begin{enumerate}
\item The two pairs of nodes (A,B) and (B,C) check directly.

\item If [$A(g)\equiv B(g)$ and $B(g)\nequiv C(g)$] or [$A(g)\nequiv B(g)$ and  $B(g)\equiv C(g)$]: 

The algorithm aborts the current generation, switches to mode Undetected 1=1$\neq$, and restarts the current generation with the new mode. Following generations will also operate in the new mode.

\item  If $A(g)\nequiv B(g)$ and $B(g)\nequiv C(g)$:

The algorithm aborts the current generation, switches to mode Undetected 2$\neq$, and restarts the current generation with the new mode. Following generations will also operate in the new mode.

\item If $A(g)\equiv B(g), B(g)\equiv C(g)$:

The pair of nodes (A,C) check through node D.
\begin{enumerate}
\item If $A(g)\nteq{D} C(g)$: 

It contradicts with the condition $A(g)\equiv B(g), B(g)\equiv C(g)$. So failure is detected. A full-broadcast similar to the extended round 3 in \cite{infocom-ByzantineBroadcast} is performed to diagnose the system. The location of the faulty node will be either narrowed down to a subset of 2 nodes, or correctly identified by the fault-free nodes. Then the algorithm switches to mode Detected or Identified, for the following generations. In particular, if the faulty node is narrowed down to a subset of 2 nodes, then switch to mode Detected; If the faulty node has been identified, then switch to mode Identified.

\item If $A(g)\teq{D} C(g)$:

Node B  sends $BD$ coded packets generated from  $B(g)$ to node D on link BD. Node D checks the consistency of the packets it has received from nodes A, B and C. 

\begin{enumerate}
\item \label{step:2=:decide}
If node D finds the packets consistent, it broadcasts this finding and:

 Nodes A, B and C decide on the output of the current generation as $A(g)$, $B(g)$ and $C(g)$, respectively. Node D decides on the unique solution of the received packets. 

\item Otherwise, failure is detected.  Then the full-broadcast is performed to diagnose the failure, and the system switches to mode Detected or Identified according to the outcome of the diagnosis.
\end{enumerate}
\end{enumerate}
\end{enumerate}

\begin{theorem}[Correctness of Mode Undetected 2=]
\label{thm:2=}
If the fault-free nodes decide on an output for the $g$-th generation in mode Undetected 2=, then the decided outputs are identical and equal to the input of this generation at the fault-free nodes in set \{A,B,C\}.
\end{theorem}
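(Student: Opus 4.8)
<br>

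The plan is to prove Theorem~\ref{thm:2=} by a case analysis following the branching structure of Mode Undetected 2=, since the fault-free nodes can only decide in one specific place, namely step~\ref{step:2=:decide}. First I would observe that the mode reaches a decision for the $g$-th generation \emph{only} through branch 4(b)i: all other branches (2, 3, 4a, and 4(b)ii) either abort and switch modes or detect a failure, and in those cases no output is decided in this mode. Therefore it suffices to show that whenever execution reaches step~\ref{step:2=:decide}, the decided outputs of the fault-free nodes are identical and equal to the common input of the fault-free nodes among $\{A,B,C\}$.

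Next I would split into two cases depending on where the single faulty node (if any) lies. \textbf{Case 1: the faulty node is $D$, or all nodes are fault-free.} Then $A$, $B$, $C$ are fault-free. Reaching step~\ref{step:2=:decide} requires $A(g)\equiv B(g)$ and $B(g)\equiv C(g)$ from the direct checks in step~1. By the correctness of the direct-check operation (point 2 in the terminology list, valid since $\widehat{AB},\widehat{BC}>R$), for fault-free pairs $X(g)\equiv Y(g)$ holds iff $X(g)=Y(g)$. Hence $A(g)=B(g)=C(g)$, so nodes $A,B,C$ decide on this identical common value. It remains to check that $D$'s decision also equals this value: $D$ decides on the unique solution of the packets it received, and since $A,B,C$ are fault-free and sent coded packets consistent with their (identical) input, that unique solution equals the common value; node $D$'s consistency check having passed guarantees a well-defined unique solution here.

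\textbf{Case 2: the faulty node is one of $A$, $B$, $C$.} Here the subtlety is that the faulty node's ``input'' is not well-defined, so the claim about equalling ``the input of the fault-free nodes in set $\{A,B,C\}$'' refers to the two fault-free members. I would argue that to reach step~\ref{step:2=:decide}, the checks $A(g)\equiv B(g)$, $B(g)\equiv C(g)$, $A(g)\teq{D}C(g)$ must all have reported equality, and node $D$'s final consistency check over all packets from $A,B,C$ must have passed. The key point is that each check is a reliable (Byzantine) broadcast of a 1-bit verdict by both participants, and the correctness statements for the check operations (points 2 and 3, whose capacity conditions $\widehat{AB},\widehat{BC}>R$ and the $\ge R$ bound derived for checking through $D$ are met) were established only for fault-free pairs. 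The hard part will be ruling out the scenario where the faulty node sends mutually consistent but ``wrong'' coded packets that pass every check yet cause the two fault-free nodes to decide on different values. I would close this by invoking the linear-independence property of the coded packets: every subset of $R$ coded packets forms an invertible system, so the consistency checks passing at $D$ (who receives enough packets, by the $\ge R$ capacity bound) forces a single unique solution consistent with what both fault-free nodes hold; combined with the pairwise ``$\equiv$'' verdicts, this pins all fault-free inputs in $\{A,B,C\}$ to that one solution, so the two fault-free nodes decide identically on it. I expect this coding/consistency argument to be the main obstacle, as it is where the faulty node's adversarial freedom must be neutralized.
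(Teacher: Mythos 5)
Your Case 1 is fine and matches the easy half of the argument, but Case 2 --- which you correctly identify as the crux --- rests on a step that fails. You want D's final consistency check, together with linear independence of the coded packets, to ``pin all fault-free inputs in $\{A,B,C\}$ to that one solution.'' It does not: D is only guaranteed more than $R$ packets in \emph{aggregate} from any two of A, B, C (e.g., $BD+CD\ge I^*>R$); an individual link such as $BD$ may carry fewer than $R$ packets. If, say, A is faulty, then D's check passing only says there is a single value $X$ whose projections onto B's $BD$ coding vectors agree with $B(g)$ and whose projections onto C's $CD$ coding vectors agree with $C(g)$; since $BD<R$ and $CD<R$ are both possible, this does not force $B(g)=X=C(g)$. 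Indeed, if $B(g)-C(g)$ happens to lie in the subspace of vectors orthogonal to both sets of coding vectors (a subspace that can be nonzero), a faulty A can tailor its own packets so that D's consistency check passes even though $B(g)\neq C(g)$. So the coding argument at D cannot, by itself, neutralize the faulty node; in that scenario it is the direct $(B,C)$ check, not anything at D, that saves the algorithm.

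What actually closes Case 2 is a simpler observation that your proposal never makes: whichever single node is faulty, at least one of the three sets $\{A,B\}$, $\{B,C\}$, $\{A,C,D\}$ consists entirely of fault-free nodes, and the corresponding check required to pass before a decision --- the $(A,B)$ direct check, the $(B,C)$ direct check, or the $(A,C)$ check through D --- is then performed exclusively by fault-free nodes, hence (by the capacity conditions $\widehat{AB},\widehat{BC}>R$ and the point-3 inequality for checking through a third node) correctly reflects equality of the corresponding inputs. Concretely: if A is faulty, the $(B,C)$ direct check forces $B(g)=C(g)$; if C is faulty, the $(A,B)$ direct check forces $A(g)=B(g)$; if B is faulty, the $(A,C)$ check through the fault-free node D forces $A(g)=C(g)$. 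Only \emph{after} the fault-free inputs in $\{A,B,C\}$ are known to share a common value $v$ does the coding argument legitimately enter, and only to handle D's own decision: D receives at least $R$ packets all generated from the same $v$ (since any two of $AD,BD,CD$ sum to more than $R$), so if its consistency check passes, its unique solution must equal $v$. Your proposal reverses this logical order, and the step you substitute for the case identification is invalid.
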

\begin{proof}
As we have seen, in mode Undetected 2=, the nodes can  decide on an output only if $A(g)\equiv B(g)$, $B(g)\equiv C(g)$ and $A(g)\teq{D} C(g)$. It is not hard to see that, at least one of the sets \{A,B\}, \{B,C\}, and \{A,C,D\} consists of only fault-free nodes. Then no matter which node is faulty, there are at least 2 fault-free nodes in the set \{A,B,C\}, and they must have identical input values for the $g$-th generation. So in Step \ref{step:2=:decide}, the fault-free nodes in \{A,B,C\} decide on identical outputs. 

In addition, in the case node D is fault-free, it receives at least $R$ coded packets from the fault-free nodes in the set \{A,B,C\} by \ref{eq:4node:bound} and the fact that $R<I^*$, of which the unique solution is the input of the fault-free nodes in \{A,B,C\}. As a result, if node D does not detect a failure, its output value of the $g$-th generation equals to the input of the other fault-free nodes.
\end{proof}

\subsection{Mode Undetected 1=1$\neq$}
\label{mode:undetected_1=}
For the $g$-th generation, the algorithm operates in this mode if no failure has been detected, and either (1) $A(h)\equiv B(h)$ for all $h\le g$ and $B(h)\nequiv C(h)$ for some $h\le g$; or (2) $B(h) \equiv C(h)$ for all $h\le g$ and $A(h)\nequiv B(h)$ for some $h\le g$. 

Without loss of generality, let us assume $A(h)\equiv B(h)$ for all $h\le g$ and $B(h)\nequiv C(h)$ for some $h\le g$.
Mode  Undetected 1=1$\neq$ proceeds as follows:
\begin{enumerate}
\item The pair of nodes (A,B) check directly. 

\item If $A(g)\nequiv B(g)$:

The algorithm aborts the current generation, switches to mode Undetected 2$\neq$, and restarts the current generation with the new mode. Following generations will also operate on the new mode.

\item If $A(g)\equiv B(g)$:

The pair of nodes (A,B) check through node C. Notice that no more communication on over links $\widehat{AB}$ is needed since they can use the packets they have exchanged directly in step 1 for the purpose of checking through node C.
\begin{enumerate}
\item If $A(g)\nteq{C} B(g)$:

It contradicts with the condition $A(g)\equiv B(g)$. So failure is detected.   Then the full-broadcast is performed to diagnose the failure, and the system switches to mode Detected or Identified according to the outcome of the diagnosis.

\item If $A(g)\teq{C} B(g)$:

Denote $C'(g)$ as the unique solution of the packets node C has received from nodes A and B. 
As we will show in the proof of Theorem \ref{thm:1=}, $C'(g)\equiv B(g)$. 
Then the pair of nodes (A,C) check $A(g)$ and $C'(g)$ through node D. 

\begin{enumerate}
\item If $A(g)\nteq{D} C'(g)$:

It contradicts with the condition $A(g)\equiv B(g)$ and $C'(g)\equiv B(g)$. So failure is detected.  Then the full-broadcast is performed to diagnose the failure, and the system switches to mode Detected or Identified according to the outcome of the diagnosis.

\item If $A(g)\teq{D} C'(g)$: 

Node B sends $BD$ coded packets generated from $B(g)$ to node D. Node D then checks the consistency of the packets it has received.

\begin{enumerate}
\item If node D finds the packets consistent:

Nodes A, B and C decide on $A(g)$, $B(g)$ and $C'(g)$ respectively. Node  D decides on the unique solution of the received packets.

\item Otherwise, failure is detected.  Then the full-broadcast is performed to diagnose the failure, and the system switches to mode Detected or Identified according to the outcome of the diagnosis.

\end{enumerate}
\end{enumerate}
\end{enumerate}
\end{enumerate}

\begin{theorem}[Correctness of Mode Undetected 1=1$\neq$]
\label{thm:1=}
If the nodes decide on an output for the $g$-th generation in mode Undetected 1=1$\neq$, then the decided outputs are identical and equal to the input of this generation at the fault-free node(s) in set \{A,B\} .
\end{theorem}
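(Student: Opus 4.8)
The plan is to prove Theorem~\ref{thm:1=} by a case analysis on the identity of the at most one faulty node, showing in each case that the fault-free decisions coincide and equal the input of the fault-free node(s) of $\{A,B\}$. First I would record the only path to a decision in this mode: the run must have passed $A(g)\equiv B(g)$ in Step~1, $A(g)\teq{C}B(g)$ in Step~3, and $A(g)\teq{D}C'(g)$ in Step~3(b), and node~D must have found the packets it received from A, B and C consistent in the final step. I would then fix the value each of A, B, C effectively ``presents'' for that final test at D: node~A sends $AD$ packets built from $A(g)$, node~C sends $CD$ packets built from its decoded $C'(g)$ (point~4), and node~B sends $BD$ packets built from $B(g)$.

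Next I would invoke two structural facts. Since at most one of the four nodes is faulty, at least two of $\{A,B,C\}$ are fault-free; and whenever A and B are both fault-free, the confirmed $A(g)\equiv B(g)$ together with $\widehat{AB}>R$ forces $A(g)=B(g)$ by the correctness of direct checking (point~2). The heart of the proof is the final consistency test at D. By \eqn{eq:4node:bound} and $R<I^*$, each of the sums $AD+BD$, $AD+CD$, $BD+CD$ is at least $I^*>R$, so the two fault-free nodes among $\{A,B,C\}$ jointly deliver strictly more than $R$ \emph{genuine} coded packets to D. If D nevertheless finds its entire received set consistent, those more-than-$R$ genuine packets must lie on a single codeword; hence the two fault-free presented values must be equal, must equal D's decoded solution $s$, and (by the same consistency) must agree with the faulty node's contribution as well.

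Applying this to the cases then finishes the argument. If A is faulty, the fault-free B and C present $B(g)$ and $C'(g)$, which are forced equal to $s$; thus B decides $B(g)$, C decides $C'(g)=B(g)$, D decides $s=B(g)$, matching the unique fault-free node~B of $\{A,B\}$. The case where B is faulty is symmetric with $A(g)=C'(g)=s$. If C is faulty, A and B are fault-free with $A(g)=B(g)=v$, so the $AD+BD>R$ genuine packets force $s=v$ and A, B, D all decide $v$ while C's output is irrelevant. In the remaining sub-case where D itself is faulty, A, B, C are all fault-free, $A(g)=B(g)=v$, and since C receives $AC+BC\ge I^*>R$ genuine packets all encoding $v$ it decodes $C'(g)=v$; then A, B, C decide $v$ and D's output is disregarded.

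The hard part is the subtlety that when A is faulty and a single cross-link such as $BC$ has capacity below $R$, node~C's intermediate solution $C'(g)$ is \emph{not} pinned to $B(g)$ by C's own received packets alone, so a malicious A could in principle steer $C'(g)\neq B(g)$ while keeping C's local consistency check passing. This is precisely the claim $C'(g)\equiv B(g)$ promised in the description of Step~3(b), and the key observation is that it need not be secured at the intermediate moment: it is enforced at the final test at D, where B's $BD$ packets and C's $CD$ packets (both genuine, since B and C are fault-free) total more than $R$ and cannot be simultaneously consistent unless $B(g)=C'(g)$. Collecting all cases yields identical fault-free outputs, all equal to the fault-free input in $\{A,B\}$, which proves the theorem.
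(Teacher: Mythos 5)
Your reduction of correctness to the final consistency test at D contains a genuine gap: the inference ``the two fault-free nodes jointly deliver more than $R$ genuine packets to D, and D finds everything consistent, hence $B(g)=C'(g)=s$'' is invalid. What D's check actually yields is that B's $BD$ packets are consistent with $s$ and C's $CD$ packets are consistent with $s$; in other words, the $BD$ coefficient vectors annihilate the difference $B(g)-s$, while the $CD$ coefficient vectors annihilate the \emph{different} difference $C'(g)-s$. Constraints acting on two distinct difference vectors cannot be pooled. When $BD<R$ and $CD<R$ (entirely possible even though $BD+CD>R$, e.g.\ both roughly $R/2$), each null space is nontrivial, so a faulty A can pick any $s$, pick $d_1$ annihilated by C's $CD$ vectors and $d_2\neq d_1$ annihilated by B's $BD$ vectors, and steer the execution toward $C'(g)=s+d_1$ and aim for $B(g)=s+d_2$; then every packet D receives lies on the codeword of $s$, D detects nothing, yet $B(g)\neq C'(g)\neq s$. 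The ``more than $R$ packets pin down the value'' argument is sound only when all those packets are generated from one and the same value --- which is precisely what you are trying to prove --- or, equivalently, when all the induced linear constraints act on a single difference vector. Your explicit claim that the identity $B(g)\equiv C'(g)$ ``need not be secured at the intermediate moment'' and ``is enforced at the final test at D'' is therefore exactly where the proof breaks.

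This is why the paper establishes $B(g)\equiv C'(g)$ at step 3, not at D. In the check of (A,B) through C, node B sends $BC$ packets generated from $B(g)$ to C, and C forwards $\min\{AC,CB\}$ of its received packets back to B, which B checks against $B(g)$; since C's own consistency check makes everything C holds consistent with $C'(g)$, all $BC+\min\{AC,CB\}>R$ constraints act on the single difference $B(g)-C'(g)$, forcing $B(g)=C'(g)$ whenever B and C are fault-free, even with A faulty. Symmetrically, in your ``B faulty'' case the equality $A(g)=C'(g)$ comes from the full check-through-D operation of step 3(b) (the direct exchange on links AC and CA plus D's forwarding, with A and C each verifying against $A(g)$ and $C'(g)$ respectively), not from D's terminal consistency test alone. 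Once all fault-free values in $\{A,B,C\}$ are known to coincide with a common value $v$, your concluding step is fine: D then receives more than $R$ genuine packets all generated from $v$, so its decoded solution must equal $v$. Indeed, your ``C faulty'' and ``D faulty'' cases are correct for exactly this reason; the ``A faulty'' and ``B faulty'' cases, however, need the bidirectional step-3/step-3(b) checks that your proof explicitly declines to use.
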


\begin{proof}

In mode Undetected 1=1$\neq$, the nodes can  decide on an output only if $A(g)\equiv B(g)$, $A(g)\teq{C} B(g)$ and $A(g)\teq{D} C'(g)$. 
Notice that the number of coded packets exchanged between nodes B and C in step 3 is
\begin{equation}
BC + \min\{AC,CB\} =\min \{BC+AC,BC+CB\} > R.
\end{equation}
The inequality is due to $BC+AC>R$ and $BC+CB>R$. So $B(g)\equiv C'(g)$. Similar to the proof of Theorem \ref{thm:2=}, after nodes (A,C) check $A(g)$ and $C'(g)$ through node D, it can be sure that the values $A(g),B(g),C'(g)$ are identical at the fault-free nodes in set \{A,B,C\}. Then the rest follows the same proof of Theorem \ref{thm:2=}.
\end{proof}

\subsection{Mode Undetected 2$\neq$}
\label{mode:undetected_2neq}
For the $g$-th generation, the algorithm operates in this mode if no failure has been detected,  $A(h)\nequiv B(h)$ for some $h\le g$, and $B(h) \nequiv C(h)$ for some $h\le g$.  Mode Undetected 2$\neq$ proceeds as follows:

\begin{enumerate}
\item The pair of nodes (A,C) check through node B and node D.

\item If $A(g)\nteq{B} C(g)$ and $A(g)\nteq{D} C(g)$:

In this case, if nodes A and C are both fault-free, then $A(g)\neq C(g)$ must be true. Otherwise, if node A is faulty, then nodes B and C are both fault-free and $C(h)\neq B(h)$ for some $h\le g$; Similar if node C is faulty. So no matter which node is faulty, it can now be concluded with certainty that the fault-free nodes do not all have identical input value. So the algorithm can decide on a default value and terminate.

\item If $A(g)\teq{B} C(g)$ and $A(g)\nteq{D} C(g)$; or $A(g)\nteq{B} C(g)$ and $A(g)\teq{D} C(g)$:

Failure is detected.  Then the full-broadcast is performed to diagnose the failure, and the system switches to mode Detected or Identified according to the outcome of the diagnosis.

\item If $A(g)\teq{B} C(g)$ and $A(g)\teq{D} C(g)$:
Node B forwards as many packets received from nodes A and C as possible to node D on link BD.
Node D forwards as many packets received from nodes A and C as possible to node B on link DB. Then nodes B and D check the consistency of the packets they have received.

\begin{enumerate}
\item If both nodes B and D find the received packets consistent:

 Nodes A and C decide on $A(g)$ and $C(g)$ respectively. Nodes B and D decide on the unique solution of the received packets.

\item Otherwise, failure is detected. Then the full-broadcast is performed to diagnose the failure, and the system switches to mode Detected or Identified according to the outcome of the diagnosis.
\end{enumerate}
\end{enumerate}

\begin{theorem}[Correctness of Mode Undetected 2$\neq$]
\label{thm:0=}
If the nodes decide on an output for the $g$-th generation in mode Undetected 2$\neq$, then the decided outputs are identical and equal to the input of this generation at the fault-free node(s) in set \{A,C\} .
\end{theorem}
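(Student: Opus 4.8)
The plan is to establish consistency and validity only for the two branches of mode Undetected 2$\neq$ (\refsec{mode:undetected_2neq}) that actually decide an output: the default decision in step 2 and the value decision in step 4(a). All other branches detect a failure and switch modes, so the claim is vacuous for them. Throughout I use that at most one node is faulty, hence at least one relay in $\{B,D\}$ is fault-free and at most one reference node in $\{A,C\}$ is faulty. I also use two facts established earlier: the guarantee of the check-through-$Z$ primitive (point 3) that, when $A$, $C$ and a relay $Z\in\{B,D\}$ are all fault-free, $A(g)\teq{Z} C(g)$ holds iff $A(g)=C(g)$; and the coding property that any $R$ received packets are linearly independent, so packets encoding two different values can never be mutually consistent.

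For the value decision in step 4(a), where $A(g)\teq{B} C(g)$, $A(g)\teq{D} C(g)$ and both relays report their packets consistent, I would first identify the target value $v$. If $A$ and $C$ are both fault-free, then the reliable check through whichever relay is fault-free forces $A(g)=C(g)$, and I let $v$ be this common value; if instead one of $A,C$ is faulty, I let $v$ be the input of the unique fault-free reference node. The two reference decisions ($A\to A(g)$ and $C\to C(g)$) then equal $v$ at every fault-free reference. The heart of the argument is that each fault-free relay also decides $v$: such a relay holds genuine coded packets of $v$ from a fault-free reference node, and since the relay declares its packets consistent, the linear-independence property forces their unique solution to coincide with the value carried by those genuine packets, namely $v$. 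That each fault-free relay receives at least $R$ packets, and enough genuine ones to anchor the solution, follows from $R<I^*$ together with \eqn{eq:4node:bound} and the check-through capacity computed in point 3. Hence every fault-free node decides $v$, giving consistency and validity; the remainder parallels the proof of Theorem~\ref{thm:2=}.

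For the default decision in step 2, where $A(g)\nteq{B} C(g)$ and $A(g)\nteq{D} C(g)$, consistency is immediate: both verdicts are reliably broadcast, so every fault-free node observes the same outcome and selects the same default. For validity I would show the fault-free nodes cannot all hold a common input, so the requirement is vacuous. If $A$ and $C$ are both fault-free, the reliable check through the fault-free relay forces $A(g)\neq C(g)$. If instead $A$ or $C$ is faulty, then the entry condition of this mode, which supplies $A(h)\nequiv B(h)$ and $B(h)\nequiv C(h)$ for some $h\le g$, when read on the fault-free pair inside $\{A,B\}$ or $\{B,C\}$ exhibits two fault-free nodes with unequal inputs. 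In every case no common fault-free input exists, so the default is a valid decision.

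The step I expect to be the main obstacle is the value decision in step 4(a) when a reference node, $A$ or $C$, is faulty. There both checks route through or originate at the faulty node, so I cannot read off $A(g)=C(g)$ from them, and correctness must rest entirely on the fault-free relays. The delicate point is to show a faulty reference cannot coerce a fault-free relay into accepting a wrong value: either the packets it injects are inconsistent with the genuine packets of the fault-free reference, in which case the relay detects inconsistency and the algorithm leaves this branch without deciding, or they agree with $v$, in which case the unique solution is necessarily $v$. Turning this dichotomy into a proof with the linear-independence of the coded packets, and verifying via the capacity bound that each deciding fault-free relay always retains enough genuine packets to fix the solution at $v$, is where the real work lies.
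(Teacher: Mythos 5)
Your decomposition into the default branch (step 2) and the value branch (step 4(a)) is sound, and two of your three cases are fine: the default decision (which the paper's own proof does not even discuss, though its justification is embedded in the algorithm description), and the value decision when A and C are both fault-free, where each fault-free relay holds $AB+CB\ge I^*>R$ or $AD+CD\ge I^*>R$ genuine packets of the common value. The problem is the case you yourself defer as ``where the real work lies'': step 4(a) with a faulty reference node. That case is the crux of the theorem, and the anchor you propose for it is not available. When A is faulty, the genuine packets held by fault-free relay B are the $CB$ packets from C plus at most whatever C-originated packets D happens to forward; but \eqn{eq:4node:bound} lower-bounds only sums of two links entering a \emph{common} node, so nothing forces $CB+CD>R$. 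For instance, $CB=CD=0.4R$ with all other link capacities equal to $R$ (and $DB=0.7R$) satisfies $I^*>R$ and $\widehat{AB},\widehat{BC}>R$, yet leaves B with at most $0.8R<R$ genuine packets. Fewer than $R$ genuine packets determine $v$ only up to an affine family of candidate values, so a faulty A can inject full-rank packets consistent with some $w\neq v$ in that family without creating any inconsistency against the genuine packets alone. Your dichotomy (``inconsistent with the genuine packets, or else equal to $v$'') therefore fails exactly in the case it was meant to settle.

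The missing idea, which is what the paper's terse proof actually rests on, is to anchor the relay with \emph{certified} packets, not only genuine ones. When C broadcasts ``='' in the step-1 check of (A,C) through B, the $\min\{AB,BC\}$ packets of A that B forwarded to C are thereby certified to be satisfied by $v=C(g)$, and B knows which of its received packets these are. Together with the $CB$ genuine packets from C, B holds at least $\min\{AB+CB,\;BC+CB\}\ge\min\{I^*,\widehat{BC}\}>R$ received packets that $v$ satisfies. Since B's consistency check requires every $R$-subset of its received packets to have a unique, common solution, passing the check forces that solution to be $v$: any $R$-subset drawn from the certified-plus-genuine packets must be full rank (else the check fails) and is satisfied by $v$. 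This is precisely the content of the paper's statement that ``node B exchanges $>R$ packets with each one of nodes A and C,'' yielding $A(g)\equiv B'(g)$ and $B'(g)\equiv C(g)$, after which the argument concludes as in Theorem~\ref{thm:2=}. Without this certified-relay accounting, your proof establishes only the easy cases and leaves the adversarial one open.
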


\begin{proof}
In mode Undetected 2$\neq$, the nodes can  decide on an output only if $A(g)\teq{B} C(g)$ and $A(g)\teq{D} B(g)$. 
Similar to the proof of Theorem \ref{thm:1=}, when nodes (A,C) check through node B, node B exchanges $>R$ packets with each one of nodes A and C. If we denote $B'(g)$ as the unique solution of the packets node B has received, it follows $A(g)\equiv B'(g)$ and $B'(g)\equiv C(g)$. Then the rest follows the same proof of Theorem \ref{thm:2=}.
\end{proof}

\subsection{Mode Detected}
\label{mode:detected}
The algorithm operates in this mode if failure has been detected and the location of the faulty node is narrowed down to a subset of two nodes. Without loss of generality, assume that the faulty node has been narrowed down to the set $\{B,D\}$. It is worth noting that in this case, all fault-free nodes know that nodes A and C must be fault-free.

\begin{enumerate}
\item The pair of nodes (A,C) check through node B and node D.

\item If $A(g)\nteq{B} C(g)$ and $A(g)\nteq{D} C(g)$:

In this case, fault-free nodes A and C must have different input values. So the algorithm can decide on a default value and terminate.

\item If $A(g)\teq{B} C(g)$ and $A(g)\nteq{D} C(g)$; or $A(g)\nteq{B} C(g)$ and $A(g)\teq{D} C(g)$:

Failure is detected.  Then the full-broadcast is performed to diagnose the failure, and the system switches to mode Identified according to the outcome of the diagnosis.

\item If $A(g)\teq{B} C(g)$ and $A(g)\teq{D} C(g)$:

Nodes A and C decide on $A(g)$ and $C(g)$ respectively. Nodes B and D decide on the unique solution of the  packets received from nodes A and C. The fault-free node of \{B,D\} has received enough ($R$) packets since $AB+CB,AD+CD>R$.
\end{enumerate}

The proof of correctness of this mode is trivial. So we do not include it in this report.

\subsection{Mode Identified}
\label{mode:identified}
The algorithm operates in this mode if failure has been detected and the location of the faulty node is narrowed down to a specific node. Without loss of generality, assume that the faulty node has been identified as node $B$. In this case, nodes A, C and D are sure that they are fault-free.

\begin{enumerate}
\item The pair of nodes (A,C) check through node D.

\item If $A(g)\nteq{D} C(g)$:

In this case, fault-free nodes A and C must have different input values. So the algorithm can decide on a default value and terminate.

\item If $A(g)\teq{D} C(g)$:

Nodes A and C decide on $A(g)$ and $C(g)$ respectively. Node D decides on the unique solution of the  packets received from nodes A and C.
\end{enumerate}

The proof of correctness of this mode is trivial. So we do not include it in this report.

\subsection{Throughput Analysis}
\label{subsec:throughput}
The throughput analysis of the proposed Byzantine consensus algorithm for complete 4-node networks are similar to the one for our Byzantine broadcast algorithms in \cite{techreport_Byzantine_complete,infocom-ByzantineBroadcast}. 
It is not hard to see that, in every generation, the usage of each link for 
communications of {\em data} packets is within the limit of link capacity. Communication overhead includes: reliable broadcast of 1-bit notifications, dropping existing generations when the algorithm switches between modes, and the diagnosis process after failure is detected. 

Similar to \cite{infocom-ByzantineBroadcast}, each reliable broadcast of 1-bit notifications uses a constant number of bits on each link, independent of $Rc$ -- the number of bits to agree per generation. Since there are a constant number of 1-bit notifications per generation, the reduction in capacity from the notification overhead can be made arbitrarily close to 0 by increasing $c$. Besides, there will be at most 4 mode transitions and 2 diagnosis process throughput the infinite time horizon. So the reduction in capacity from mode transitions and full-broadcast diminishes to 0 as time goes to $\infty$.

Similar to \cite{infocom-ByzantineBroadcast}, the proposed consensus algorithm can achieve any throughput $R < I^*$ in complete 4-node networks by pipelining. Together with Theorem \ref{thm:upper_bound}, we then can conclude that $C_{con} = I^*$ for complete 4-node networks. 



\bibliography{../PaperList}


\end{document}